\newcommand{\gF}[1][2]{\mathbb{F}_{#1}}
\newcommand{\cS}{\mathcal{S}}
\newcommand{\cF}{\mathcal{F}}
\newcommand{\cT}{\mathcal{T}}
\newcommand{\RR}{\mathbb{R}}
\newcommand{\NN}{\mathbb{N}}
\newtheorem{definition}{Definition}
\newtheorem{proposition}{Proposition}
\newtheorem{lemma}{Lemma}
\newtheorem{theorem}{Theorem}
\newcommand{\myBig}[1]{\makebox(0,0){\huge $#1$}}
\newcommand{\bcF}{\myBig{\cF}}
\DeclareMathOperator{\supp}{supp}
\DeclareMathOperator{\wt}{wt}
\DeclareMathOperator{\Mat}{Mat}
\title{Binary Non-Tiles}
\author{Don Coppersmith \and Victor~S. Miller}
\date{\today}
\begin{document}
\maketitle
\begin{abstract}
  A subset $V \subseteq \gF[2]^n$ is a {\em tile} if $\gF[2]^n$ can be
  covered by disjoint translates of $V$.  In other words, $V$ is a
  tile if and only if there is a subset $A \subseteq \gF[2]^n$  such
  that $V+A = \gF[2]^n$ uniquely (i.e., $v + a = v' + a'$ implies that
  $v=v'$ and $a=a'$ where $v,v' \in V$ and $a,a' \in A$).  In some
  problems in coding theory and hashing we are given a putative tile
  $V$, and wish to know whether or not it {\em is} a tile.  In this
  paper we give two computational criteria for certifying that $V$
  {\em is not} a tile.  The first involves impossibility of a bin-packing
  problem, and the second involves infeasibility of a linear
  program.  We apply both criteria to a list of putative tiles given
  by Gordon, Miller, and Ostapenko in the context of hashing to find
  close matches, to show that none of them are, in fact, tiles.
\end{abstract}

\section{Tiles}
\label{sec:tiles}

We first define tiles, and make some observations about them.
Discrete tiles arise in problems in perfect codes \cite{EtzionVardy,EtzionVardy2},
and hashing \cite{GordonMillerOstapenko}. Many of their properties
have been extensively analyzed in \cite{TilingBinary}.

In \cite{GordonMillerOstapenko} the
authors give  a table
of ten putative tiles, which is reproduced in 
Table~\ref{optk}.
In
Section 9 of \cite{TilingBinary} the authors give several criteria for
non-tiling.  However, none of these apply to the putative tiles given
in Table~\ref{optk}.  Non-tilings
have also been analyzed in \cite{FullRank8} and \cite{FullRank9}.

\begin{definition}
  Let $U$ be a finite-dimensional vector space over $\gF[2]$.  A
  subset $V \subseteq U$ (which we also refer to as a {\em region}) is
  a {\em tile} (of $U$), if $U$ can be written as the disjoint union
  of translates of $V$.
  In other words, there is a subset $A
  \subseteq U$ such that every element of $U$ can be written {\em
    uniquely} in the form $v + a$, where $v \in V$ and $a \in A$.

  Note that the definition is symmetric in $V$ and $A$, so
  that $A$ is also a tile, called a {\em complement} of $V$.
\end{definition}
\begin{lemma} A subset $V \subseteq U$ of a vector space over $\gF[2]$
  is a tile if and only if
  \begin{eqnarray}
    \label{eq:Tile}
    V+A  & = U \\
    \label{eq:Tile2}
    (V+V) \cap (A+A) & = \{0 \},
  \end{eqnarray}
  where $X+Y := \{x+y : x \in X, y \in Y\}$.
  \end{lemma}
  \begin{proof}  
    The equation
  \eqref{eq:Tile} says that every element of $U$ can be written as a
  sum of an element of $V$ and an element of $A$.  The equation
  \eqref{eq:Tile2} says that this representation is unique.
  Since we're working in characteristic 2, if $v, v' \in V, a,a' \in
  A$ then $v+a = v'+a'$ if and only if $v+v'=a+a'$. Uniqueness of the
  representation is equivalent to both sides being 0.
\end{proof}
\begin{definition}
  If $X \subseteq U$ denote by $\langle X \rangle$ the linear
  subspace generated by $X$ (i.e. the smallest linear subspace
  containing $X$).
\end{definition}
\begin{lemma}
  A subset $V \subseteq U$ is a tile of $U$ 
if and only if $V$ is a tile of
$\langle V \rangle$.
\end{lemma}
\begin{proof}
  Let $W$ be a linear complement of $\langle V \rangle$ (i.e.,
  $W$ is a linear subspace and $W + \langle V \rangle = U$ and $W \cap
  \langle V \rangle = \{0\}$), and $A$ be a complement of $V$ in
  $\langle V \rangle$, then $A+W$ is a complement of $V$.  If $B$ is a
  complement of $V$ in $U$, then $B \cap \langle V \rangle$ is a
  complement of $V$ in $\langle V \rangle$.
\end{proof}
Thus we may
assume, without loss of generality, that $\langle V \rangle =
U$.
\begin{definition}
A subset $V \subseteq U$ is a {\em proper tile} of $U$ if $V$ is a
tile of $U$ and $\langle V \rangle = U$.

If we also assume that the complementary tile $A$ is proper, we say
that the tiling is {\em full-rank}.
\end{definition}

Suppose that we are given a subset $V \subseteq \gF[2]^n$.  We wish to
know whether or not $V$ is a tile, and if it is, to find a complement
$A$.  In the next two sections we give two methods of showing that $V$
is not a tile, and apply them to the list of putative tiles in
Table~\ref{optk} (taken from \cite{GordonMillerOstapenko} using a
slightly different notation), to show that none of them are, in fact, tiles.

In the following we freely use the common practice of identifying a
bit vector with the set of indices in which the vector has the value
1.
The motivating examples we use here are from calculations performed in
\cite{GordonMillerOstapenko} which is concerned with finding hash
functions which are optimal in a particular sense.  One way of
constructing an optimal hash function $H: \gF[2]^n \rightarrow
\gF[2]^r$ would be to find a region $V \subset \gF[2]^n$ with $|V| =
2^{n-r}$ which is both a tile and optimal in the following sense:

\begin{definition}  
  If $S \subseteq \gF[2]^n$, and $p \in (0,1)$, define
  \begin{displaymath}
    f_S(p) := \sum_{a,b \in S} (p/(1-p))^{\wt(a+b)},
  \end{displaymath}
  where $\wt(x)$ is the Hamming weight of the bit vector $x$.
  Say that a region $S \subseteq \gF[2]^n$ is {\em optimal at $p$} if
  \begin{displaymath}
    f_S(p) \ge f_{S'}(p)
  \end{displaymath}
  for all $S' \subseteq \gF[2]^n$, with $|S'| = |S|$.

  We say that $S$ is {\em optimal} if there is a $0 < p < 1$ such that $S$
  is optimal at $p$.
\end{definition}

\begin{definition}
  A {\em set system} is a finite collection of finite subsets of the
  positive integers $\NN$.
\end{definition}
We find it convenient to identify subsets $S \subseteq \gF[2]^n$
with the set system given by the non-zero coordinates of the vectors in $S$:
\begin{displaymath}
  \cS(S) := \{ \{ i \in \NN: v_i = 1\} : v \in S \}.
\end{displaymath}
So in terms of set systems we have polynomials
\begin{displaymath}
  f_{\cS}(p) = \sum_{A,B \in \cS} (p/(1-p))^{|A \Delta B|},
\end{displaymath}
where $A \Delta B$ denotes the symmetric difference.

In order to state the theorem from \cite{GordonMillerOstapenko} we
define a partial order on subsets of $\NN$:
\begin{definition}
  We define a partial order on finite subsets of $\NN$, called the
  {\em right shifted partial order} written $S
  \ge_R T$, as follows:  For a finite subset $S \subseteq \NN$, denote
  by $S_i$ the $i$-th largest element of $S$.  Then we say that $S
  \ge_R T$ if $|S| \ge |T|$ and $S_i \ge T_i$ for $ 1 \le i \le |T|$.
\end{definition}

\begin{definition}
  We say that two set systems, $\cS$ and $\cT$ are {\em isomorphic} if there
  is a permutation $\pi$ of $\NN$ which fixes all but finitely many
  elements, and a finite subset $A \subset \NN$, such that
  $\cT = \{ A \Delta \pi(S) : S \in \cS \}$, where $\Delta$ denotes
  the symmetric difference.

  Note that if $\cS$ is isomorphic to $\cT$ then $f_{\cS}(p) = f_{\cT}(p)$.
\end{definition}

\begin{theorem}[Gordon, Miller, Ostapenko]  An optimal region $\cS$ is 
  isomorphic to an order ideal for the order $\ge_R$.
\end{theorem}

In the table below we give the putative tiles taken from
\cite{GordonMillerOstapenko}.  These were found by using the above
theorem to restrict the search to all order ideals of cardinality 64.
Instead of giving them as collections of bit vectors we specify them
as set systems.  Each is an order ideal for the order $\ge_R$ which we
specify by giving its set of maximal elements.  Each of the putative
tiles has cardinality 64.  The column labeled $k$ indicates that a
putative complement $A$ has cardinality $2^k$.

\begin{table}[h!]
  \caption{Putative tiles}
  \begin{center}
    \begin{tabular}{|c||c|l|} \hline
      $k$ & $n$ & generators of $V$ \\
      \hline\hline
      6 & 12 & $\{11\},\{10,5\}, \{9,8\}$ \\
      7 & 13 & $\{12\},\{10,4\},\{9,8\}$ \\
      8 & 14 & $\{13,2\},\{13,1,0\},\{3,2,0\}$ \\
      9 & 15 & $\{14,1,0\},\{10,2\}$ \\
      16 & 22 & $\{21,1\}$ \\
      17 & 23 & $\{22,0\},\{19,1\}$ \\
      18 & 24 & $\{23,0\},\{17,1\}$ \\
      19 & 25 & $\{24,0\},\{15,1\}$ \\
      20 & 26 & $\{25,0\},\{13,1\}$ \\
      21 & 27 & $\{26,0\},\{11,1\}$ \\
      \hline
    \end{tabular}
    \label{optk}
  \end{center}
\end{table}

For example, the first line in Table~\ref{optk} lists generators
$$\{11\}, \{10,5\}, \{9,8\}.$$

Then $V$ consists of the following 64 sets, 
each $\le_R$  at least one of the generators:

\begin{displaymath}
  \begin{array}{rll}
    \{m\}    , & 0 \leq m \leq 11:    & \mbox{12 sets} \\
    \{10,m\} , & 0 \leq m \leq 5:     & \mbox{6 sets} \\
    \{n,m\}  , & 0 \leq m < n \leq 9: & \mbox{45 sets} \\
    \emptyset : &                     & \mbox{1 set} 
  \end{array}
\end{displaymath}

\section{Bin Packing}
\label{sec:binpacking}

A straightforward combinatorial approach to showing non-tiling is via
bin packing.  This is the idea:  We are given a subset $V \subseteq U$
of an $\gF[2]$ vector space and we wish to prove that it {\em cannot}
be a tile.  We find an auxilliary linear projection 
\begin{displaymath}
  \pi : U \rightarrow W,
\end{displaymath}
where $W$ is another $\gF[2]$ vector space.  The projection $\pi$
partitions the elements of $U$ into equivalence classes: $v \sim v'$
if and only if $\pi(v) = \pi(v')$.  We say that a subset of $U$ of the
form
\begin{displaymath}
  P_{a,w} := \{ v+a : v \in V, \pi(v+a) = w \},
\end{displaymath}
where $a \in U, w \in W$ is a {\em piece}, and that a subset of the
form $\pi^{-1}(w)$ is a {\em bin}.  Note that $P_{a,w} = P_{0,w +
  \pi(a)} + a$.
We then have
\begin{lemma}
  Let $V \subset U$ be a tile with complement $A \subset U$, and $\pi:
  U \rightarrow W$ a linear projection of vector spaces.  Then for all
  $w \in W$ the collection $\{ P_{a,w} : a \in A, P_{a,w} \ne
  \emptyset \}$ is a partition of the bin $\pi^{-1}(w)$.
\end{lemma}
\begin{proof}
  If $P_{a,w} \cap P_{a',w} \ne \emptyset$ then there are $v,v' \in V$
  such that $a+v = a'+v'$.  By definition of a tile this implies that
  $a=a', v= v'$.  We also have $\cup_{a \in A} P_{a,w} = \pi^{-1}(w)$
  by surjectivity of $\pi$
\end{proof}

This approach easily shows non-tiling for the last eight entries in
Table~\ref{optk} by using arguments about the cardinalities of the
pieces $P_{a,w}$.  Note that since $P_{a,w} = a + P_{0,w+\pi(a)}$ that the
possible cardinalities of $P_{a,w}$ are the same as those for $P_{0,w}$
In addition, by the above lemma, every one of
the pieces $P_{0,w} \ne \emptyset$ {\em must} be used in the partition of
some $\pi^{-1}(w)$.  In the arguments below we list the multiset of
cardinalities of the non-empty $P_{0,w}$ for all $w$.

In Table~\ref{proj} below we give the results of projecting each of
the putative tiles in Table~\ref{optk} onto coordinates $r, \dots,
n-1$.  The column labeled ``piece census'' specifies the multiset of
piece sizes obtained.  For example, in the row labeled $k=8$, the
entry 10*5, 1*6, 1*8 means that there are 10 pieces of size 5, 1 of
size 6, and 1 of size 8.  We work out this example in detail to show
the idea:

There are ten pieces of size 5:
\begin{displaymath}
\{ \{m,2\}, \{m,1,0\}, \{m,1\}, \{m,0\}, \{m\} \} \text{ for }
m=4,\dots, 13.
\end{displaymath}
There is one piece of size 6:
\begin{displaymath}
  \{ \{3,2,0\}, \{3,1,0\}, \{3,2\}, \{3,1\}, \{3,0\}, \{3\} \},
\end{displaymath}
and one piece of size 8:
\begin{displaymath}
  \{ \{2,1,0\}, \{2,1\}, \{2,0\}, \{2\}, \{1,0\}, \{1\}, \{0\}, \{\} \}.
\end{displaymath}

For the rows with $k=8$ and $k=9$, the bin size is 8, and the
minimum piece size is $\ge 4$.  Placing a piece of size 5 
 leaves no way to fill up a bin of size 8.  Similarly, in the rows
for $k=16,\dots, 21$,
the bin size is 4, and there are no pieces of size 1.  Placing a
piece of size 3 leaves no way of filling up the bin in which it is placed.

Thus, none of the last eight rows in Table~\ref{optk} can be a tile.

We have not been able to find a projection to make this argument work
for the first two rows of the table.

\begin{table}[htbp]
   \caption{Projections}
 \begin{center}
  \begin{tabular}[h]{|r|r|r||c|}
    \hline
    $k$ & $r$ & bin size &piece census \\
    \hline \hline
    8 & 3 & 8 & 10*5, 1*6, 1*8 \\
    9 & 3 & 8 & 4*4, 8*5, 1*8 \\
    16 & 2 & 4 & 20*3,  1*4 \\
    17 & 2 & 4 &3*2, 18*3, 1*4 \\
    18 & 2 & 4 & 6*2, 16*3, 1*4\\
    19 & 2 & 4 & 9*2, 14*3, 1*4 \\
    20 & 2 & 4 & 12*2, 12*3, 1*4 \\
    21 & 2 & 4& 15*2, 10*3, 1*4 \\
    \hline
  \end{tabular}
\label{proj}
\end{center}
\end{table}

\section{Linear Programming}
\label{sec:LP}

In this section we'll rewrite the defining conditions for a tile in
terms of a linear program.  We'll identify subsets $S \subseteq U$ with
their characteristic functions $\chi_S : U \rightarrow \RR$:
$\chi_S(x) = 1$ if $x \in S$ and 0 otherwise.  Denote convolution
of functions $f,g : U \rightarrow \RR$ by
\begin{displaymath}
  f \star g (x) = \sum_{y \in U} f(y) g(x+y),
\end{displaymath}
and the Fourier transform
\begin{displaymath}
  \widehat{f}(y) = \sum_{x \in U} (-1)^{x\cdot y} f(x).
\end{displaymath}
As is well known:
\begin{displaymath}
  \widehat{f \star g} = \widehat{f} \, \widehat{g}.
\end{displaymath}
Note that if $X,Y \subseteq U$ then
\begin{displaymath}
  \chi_X\star \chi_Y(z) := | \{ (x,y) : x \in X, y \in Y, x+y = z \}|,
\end{displaymath}
the number of ways of writing $z$ as the sum of an element in $X$ and
an element of $Y$.
Thus, we may express the
condition for $V$ to be a tile (with $A$ as a complement) as
\begin{equation}
  \label{eq:rewrite1}
  \chi_V \star \chi_A = 1,
\end{equation}
where 1 denotes the constant function with value 1.  Although this is
a necessary and sufficient condition (along with the condition that
$\chi_A(u) \in \{0,1\}$) it proves to be too weak to use as a linear
programming criterion to certify non-tiling.  We supplement it with
the condition derived from
\eqref{eq:Tile2}:
\begin{equation}
  \label{eq:rewrite2}
  (\chi_V \star \chi_V)(\chi_A \star \chi_A) = |U| \delta,
\end{equation}
where $\delta : U \rightarrow \RR$ is the function $\delta(0) = 1$ and
$\delta(x) = 0$ when $x \ne 0$.

Taking the Fourier transform of \eqref{eq:rewrite1} yields
\begin{equation}
  \label{eq:rewritea}
  \widehat{\chi_V} \widehat{\chi_A} = |U| \delta.
\end{equation}

This suggests a linear program in which we use the variables $\chi_A
\star \chi_A(u)$ instead of $\chi_A(u)$.  We are given $V$.  Since $|V| |A| =
2^n$ we know $|A|$ (if it exists). We have variables $b_u$ for $u \in
\gF[2]^n$ and $c_x$ for $x \in \gF[2]^n$. We'll want
\begin{displaymath}
  b_u = \chi_A \star \chi_A(u),
\end{displaymath}
and
\begin{displaymath}
  c_x = |\widehat{\chi_A}(x)|^2.
\end{displaymath}
We have the conditions
\begin{align}
  0 \le b_u \le & |A|  \text{ and is an integer,}
  \label{eq:lp} \\
  0 \le c_x \le & |A|^2  \text{ and is the square of an integer,}
  \tag{\ref{eq:lp}a}\\
    b_0 = & |A| \tag{\ref{eq:lp}b}\\
      c_0 = & |A|^2 \tag{\ref{eq:lp}c}\\
        b_u  = & 0 
        \text{ if } u \ne 0 \text{ and } \chi_V \star \chi_V(u)  \ne  0  \tag{\ref{eq:lp}d}\\
         c_x  = & 0  \text{ if } x \ne 0 \text{  and } \widehat{\chi_V}(x) \ne 0 \tag{\ref{eq:lp}e}\\
    c_x  = & \sum_u (-1)^{x \cdot u} b_u  \text{ for all }x \tag{\ref{eq:lp}f}
\end{align}

If we drop the conditions about $b_u$ being an integer and $c_x$ being
the square of an integer, we get a linear program which must be
feasible if $V$ is a tile.

One problem with this linear program is that it has a large number of
nonzero coefficients.  Just the condition that the $c_x$ be the
Fourier transform of the $b_u$ yields
$2^n(2^n-\max(|\supp(\chi_V \star \chi_V)|,|\supp(\widehat{\chi_V})|)$
nonzero coefficients (we can either write the $c_x$ as the transform
of the $b_u$ or the $b_u$ as the inverse transform of the $c_x$,
whichever yields a smaller system).  We can immediately halve the
number of nonzero coefficients by adding the equation

\begin{displaymath}
  \sum_u b_u = |A|^2
\end{displaymath}
to the remaining equations for the Fourier tranform.  However, the
number of nonzeros is still quite large.  We can greatly reduce this
by means of ideas from the fast Fourier transform.  We create new variables
corresponding to the intermediate results of the transform.  The usual
sort of bookeeping now yields $3n 2^n$ nonzeros (since each
``butterfly'' involves 3 variables) and introduces $n 2^n$ new
variables.

Here are the details:
Introduce variables $t_{i,j}$ with $0 \le i \le n, 0 \le j < 2^n$,
with $t_{0,j} = c_j$.  For $0 \le i < n$, $0 \le k < 2^i$, and $ 0 \le
j < 2^{n-i-1}$, introduce the equations:
\begin{align}
  t_{i+1,j + 2^{n-i}k} & = t_{i,j + 2^{n-i}k} + t_{i,j + 2^{n-i}k +
    2^{n-i-1}}   \label{eq:butterfly} \\
  t_{i+1,j + 2^{n-i}k + 2^{n-i-1}} & = t_{i,j + 2^{n-i}k} - t_{i,j +
    2^{n-i}k + 2^{n-i-1}}
  \tag{\ref{eq:butterfly}a}
\end{align}
The values $t_{n,j}$ are the values of the Fourier transform of $c_j$.
We can also achieve a significant savings in our problem by noting
that whenever one of the variables on the right hand sides of
\eqref{eq:butterfly} is 0 (which is the case for a significant
fraction of the $c_j$), then we can ``pass through'' the remaining
variable, or its negation, or a 0 if both are 0, and not create a new variable.  We
note the
effect of this special case in Table~\ref{results}, by comparing $n 2^n$
to the actual number of variables needed.

One nice feature of the approach using linear programming is that the
conditions like full-rank for the complementary tile $A$ can be
described as linear inequalities.
\begin{proposition}
  A subset $A \subseteq U$ containing $0$ generates $U$ as a linear subspace if and
  only if
  \begin{displaymath}
    |\widehat{\chi_A}(x)| \le |A|-2,
  \end{displaymath}
  for all $0 \ne x \in U$.
\end{proposition}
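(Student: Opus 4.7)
My plan is to interpret $|\chi_A(x)|$ in the statement as $|\widehat{\chi_A}(x)|$, since $\chi_A$ itself takes values in $\{0,1\}$ and the inequality would otherwise be trivially vacuous. Under that reading, the whole proof will hinge on a short parity observation about the exponential sum $\widehat{\chi_A}(x) = \sum_{a \in A}(-1)^{x \cdot a}$.

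First I would set up notation. For each $x \in U$, let $k(x) := |\{a \in A : x \cdot a = 1\}|$. Then by splitting $A$ according to the value of $x \cdot a$ we get $\widehat{\chi_A}(x) = (|A|-k(x)) - k(x) = |A| - 2k(x)$, so $\widehat{\chi_A}(x)$ has the same parity as $|A|$ and lies in the range $\{-|A|, -|A|+2, \ldots, |A|-2, |A|\}$. In particular, the value $\pm(|A|-1)$ is forbidden, which is the reason the bound in the proposition skips from $|A|$ straight down to $|A|-2$.

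Next I would identify when extremal values are attained. Clearly $\widehat{\chi_A}(x) = |A|$ iff $k(x) = 0$, i.e.\ iff $x \cdot a = 0$ for every $a \in A$, iff $x \in \langle A \rangle^\perp$. The symmetric case $\widehat{\chi_A}(x) = -|A|$ would require $k(x) = |A|$, i.e.\ $x \cdot a = 1$ for every $a \in A$; but the hypothesis $0 \in A$ forces $x \cdot 0 = 0$, ruling this out. Hence $|\widehat{\chi_A}(x)| = |A|$ if and only if $x \in \langle A \rangle^\perp$, and otherwise $|\widehat{\chi_A}(x)| \le |A|-2$ by the parity observation.

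Finally I would conclude the biconditional. Since $\langle A \rangle = U$ is equivalent to $\langle A \rangle^\perp = \{0\}$, the condition that $|\widehat{\chi_A}(x)| \le |A|-2$ for every nonzero $x$ is equivalent to saying that no nonzero $x$ lies in $\langle A \rangle^\perp$, which is exactly $\langle A \rangle = U$. I do not anticipate any real obstacle here; the only subtlety is remembering to use $0 \in A$ to eliminate the value $-|A|$, which is what makes the proposition an ``if and only if'' rather than just an implication.
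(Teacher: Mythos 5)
Your proof is correct and follows essentially the same argument as the paper: both exploit that $\widehat{\chi_A}(x)$ is a sum of $|A|$ signs with the same parity as $|A|$, equal to $|A|$ exactly when $x \in \langle A\rangle^\perp$, and never $-|A|$ because $0 \in A$. Your reading of $|\chi_A(x)|$ as the Fourier transform $|\widehat{\chi_A}(x)|$ is also exactly how the paper's own proof interprets the statement.
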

\begin{proof}
  The value $\widehat{\chi_A}(x)$ is the sum of $|A|$ terms each of which is
  $\pm 1$. Thus $\widehat{\chi_A}(x) = |A|$ if and only if $x \cdot a = 0$ for
  all $a \in A$.  This can happen if and only if $A$ does not generate
  $U$.  Similarly $\widehat{\chi_A}(x) = -|A|$ if and only if $x \cdot a = 1$
  for all $a \in A$.  This is impossible since $ 0 \in A$.  Further
  note that $\widehat{\chi_A}(x) \equiv |A| \bmod 2$, thus the value of $|A|-1$
  is impossible for $|\widehat{\chi_A}(x)|$.
\end{proof}
Thus, if we use variables representing $|\widehat{\chi_A}(x)|^2$ we can express
full rank as $|\widehat{\chi_A}(x)|^2 \le (|A|-2)^2$.

For the first four of the ten putative tiles given in Table~\ref{optk}
the resulting linear programming problem was small enough so that either
{\tt glpsol}\footnote{{\tt glpsol} is the standalone solver contained
  in {\tt GLPK} -- the GNU Linear Programming Kit \texttt{http://www.gnu.org/software/glpk}}
 or
 {\tt CPLEX}\cite{CPLEX}
 could handle it.  The results of this approach applied to the first four
cases in Table~\ref{optk} is given in Table~\ref{results}.

\begin{table}[htbp]
  \caption{Results from {\tt CPLEX}}
  \begin{center}
  \begin{tabular}[h]{|r|r|r||r|r|r|r|}
    \hline
    $k$ & $n$ & $n 2^n$ & time in seconds & rows & variables & nonzeros\\
    \hline \hline
    6 & 12 & 49152 & 0.46 & 33569 & 33414 & 99465 \\
    7 & 13 & 106496 & 72.60 &  74349 & 74710 & 221693 \\
    8 & 14 & 229376 & 0.54 & 140312 & 142632 & 419864 \\
    9 & 15 & 491520 & 39.72  &  321016 & 327828 & 961832 \\
    \hline
  \end{tabular}
  \label{results}
\end{center}
\end{table}
The system of linear equations for $k=8$ were inconsistent.  The
linear programming problems for the last six rows of the table were too
large for {\tt CPLEX} to handle.

\subsection{Certificates of non-tiling from Linear Programming}
\label{sec:farkas}

For the use of the linear programming criterion, we use the following
well-known criterion to produce a certificate of non-tiling.
\begin{lemma}[Farkas]
  Let $A \in \Mat_{m \times n}(\RR), b \in \RR^n$.  Then exactly one
  of the following statements is true:
  \begin{enumerate}
  \item There is an $y \in \RR^n$ such that $Ay = b$ and $y \ge 0$.
  \item There is a $z \in \RR^m$ such that $A^{T} z \ge 0$ and $b^T z
    < 0$.
  \end{enumerate}
\end{lemma}
We explicity describe the matrix $A$ and vector $b$ for our problem.
A solution to the second alternative above will consistute a
certificate of non-tiling.  Note that since the linear programming
problem in second alternative is homogeneous, in practice we replace
it by
\begin{quotation}
  There is a $z \in \RR^m$ such that $A^{T} z \ge 0$ and $b^T z = -c$
  for some convenient $c > 0$.
\end{quotation}

For convenience we write $f(x) = \chi_A \star \chi_A(x)$ The rows and
columns of $A$ are divided into {\em regions}.  The first region (1)
of rows is indexed by elements $\widehat{x} \in \widehat{U}$.  The
next region of rows, which is indexed by the symbols $f(x)$ for $x \in
U$ corresponds to $x \in U$ for which $f(x)$ has a fixed value --
either $|A|$ for $x = 0$ (region (2)), or $0$ for those $x$ with
$\chi_V(x) \star \chi_V(x) \ne 0$ (region (4)).  The next region is
similar to the last, is indexed by symbols of the form
$\widehat{f}(\widehat{x})$, for which $\widehat{f}(\widehat{x})^2$ has
a fixed value -- either $|A|^2$ (region (3)) for $\widehat{x} =
\widehat{0}$ or $0$ for those $\widehat{x}$ for which
$\widehat{\chi_V}(\widehat{x}) \ne 0$ (region (5)).

The columns are divided into two regions -- the first indexed by the
elements $f(x)$ and the second by elements $\widehat{f}(\widehat{x})$.
\begin{equation}
  \label{eq:dual:problem}
  \myBig{$A=$} \quad \quad \quad
  \begin{blockarray}{cccc|ccccc} 
      &      & f(x) &      &       &\widehat{f}(\widehat{x}) \\
    \begin{block}{(cccc|cccc)c}
      &      &      &      & -1 \\
      & \bcF &      &      &       & -1  &        &        & \text{(1)}\\
      &      &      &      &       &     & \ddots &        &  \\
      &      &      &      &       &     &        & -1 \\
      \cline{1-8}
    1&       &      &      &       &     &        &        &     \text{(2)} \\
    \cline{1-8}
     &       &      &      &       & 1   &        &        &     \text{(3)} \\
    \cline{1-8}
     & \ldots& 1    &      &       &     &        &        &      \text{(4)} \\      
    \cline{1-8}
     &       &      &      &       & \ldots&     1 &      &      \text{(5)} \\      
    \end{block}
  \end{blockarray}
\end{equation}

The vector $b$ is 0 everywhere, except for $b_{f(0)} = |A|,
b_{\widehat{f}({\widehat{0}})} = |A|^2$.

We denote elements of $C_2^n$ (the $n$-fold product of the group of
order 2) by $x$, and elements of its dual, $\widehat{C_2^n}$ by
$\widehat{x}$.  The vector $z$ giving the Farkas certificate is
denoted by $z$.  Note that we've eliminated the upper bounds in
equation \eqref{eq:lp}.

The part of the matrix denoted by $\cF$ is the matrix of Fourier
transform on $C_2^n$.  We can use the same idea as in
\eqref{eq:butterfly} to make the resulting set of equations much more
sparse at the expense of introducting auxilliary variables.

In terms of equations these are:

For $b^T z < 0$:
\begin{equation}
  \label{eq:rhs}
  z_{f(0)} |A| + z_{\widehat{f}({\widehat{0}})} |A|^2 < 0.
\end{equation}
As remarked above we set the right hand side of this to any convenient
negative constant.
Column $f(0)$:
\begin{equation}
  \label{eq:region:1}
  \sum_{\widehat{x} \in \widehat{C_2^n}} z_{\widehat{x}} + z_{f(0)}
  \ge 0
\end{equation}
Column $f(x)$ where $ x \in V + V$:
\begin{equation}
  \label{eq:region:2}
  \sum_{\widehat{x} \in \widehat{C_2^n}} \widehat{x}(x)
  z_{\widehat{x}} + z_{f(x)} \ge 0.
\end{equation}
Column $f(x)$ where $ x \not \in V + V$:
\begin{equation}
  \label{eq:region:3}
  \sum_{\widehat{x} \in \widehat{C_2^n}} \widehat{x}(x)
  z_{\widehat{x}} \ge 0.
\end{equation}
Column $\widehat{f}({\widehat{0}})$:
\begin{equation}
  \label{eq:region:4}
  -z_{\widehat{0}} + z_{\widehat{f}(\widehat{0})} \ge 0.
\end{equation}
Column $\widehat{f}({\widehat{x}})$, where
$\widehat{\chi_V}(\widehat{x}) = 0$:
\begin{equation}
  \label{eq:region:5}
  -z_{\widehat{x}} + z_{\widehat{f}(\widehat{x})} \ge 0.
\end{equation}
Column $\widehat{f}({\widehat{x}})$, where
$\widehat{\chi_V}(\widehat{x}) \ne 0$:
\begin{equation}
  \label{eq:region:6}
  -z_{\widehat{x}} \ge 0.
\end{equation}
In practice, we found that we could require that inequalities
\eqref{eq:region:4}, \eqref{eq:region:5} and \eqref{eq:region:6} be
equalities, and we could still find a certificate of non-tiling.  This
effectively eliminates the parts of $z$ indexed by
${\widehat{x}}$.

For the first and third rows of Table~\ref{optk} the linear program
solver CLP\footnote{The linear program solver included in the COIN-OR
  package \texttt{http://www.coin-or.org}.  See \cite{COIN:OR}} found
particular easy to describe certificates.  For the second and fourth
rows the certificates did not have an apparent structure which allows
a compact description.

The certificates below are vectors which have 0.0 everywhere
except in the coordinates specified in the tables below.
The part above the line corresponds to indices of the form
$\widehat{f}({\widehat{x}})$ and the part below to indices of the form
$f(x)$.  An expression like $(c,l)$ means a
segment of all
coordinates starting at $c$ of length $l$.  An
expression like $(c,l,s)$ has a similar meaning except that there is a
stride of $s$,  meaning the arithmetic
progression $c, c+s, \dots, c+ms$ where $m$ is the largest integer
such that $c+ms < c + l$.
\begin{table}[h]
  \centering
  \begin{tabular}[h]{||c|c||}
    \hline 
    \hline
    0 & -1024.0 \\
    320 & 1024.0 \\
    640 & 1024.0 \\
    \hline
    (192,64) & 1.0 \\
    (384, 64) & 1.0 \\
    (576, 64) & 1.0 \\
    (768, 64) & 1.0 \\
    (1216, 64)&  1.0 \\
    (1408, 64) & 1.0 \\
    (1600, 64)& 1.0 \\
    (1792, 64)& 1.0 \\
    (2240, 64)& 1.0 \\
    (2432, 64)& 1.0 \\
    (2624, 64)& 1.0 \\
    (2816, 64)& 1.0 \\
    (3264, 64)& 1.0 \\
    (3456, 64)& 1.0 \\
    (3648, 64)& 1.0 \\
    (3840, 64)& 1.0 \\
    \hline
    \hline
  \end{tabular}
  \caption{Certificate of non-tileability for the first putative tile}
  \label{tab:first}
\end{table}
\begin{table}[h]
  \centering
  \begin{tabular}[h]{||c|c||}
    \hline
    \hline
    0 & -8192.0 \\
    \hline
    (0,16384,2) & 1.0 \\
    \hline \hline
  \end{tabular}
  \caption{Certificate of non-tileability for the third putative tile}
  \label{tab:first}
\end{table}

How widely applicable are these methods?  We
have tried the linear programming method on all of the examples of
non-tiles given in \cite{TilingBinary}, and in all cases it has found
a certificate of non-tileability.  In \cite{FullRank8,FullRank9} the
authors show that there are no full-rank tilings in dimension 8 and 9
as a result of the execution of a very long running computer program.
It would be interesting to see if the linear programming method could
supply certficates of non-tileability for the all the cases examined.
\nocite{COIN:OR}
\bibliographystyle{plain}
\bibliography{CDownSets}

\begin{thebibliography}{1}

\bibitem{CPLEX}
{ILOG CPLEX 10.1 User's Manual}.
\newblock {CPLEX Optimization, Inc.}, 2006.

\bibitem{TilingBinary}
G{\'e}rard Cohen, Simon Litsyn, Alexander Vardy, and Gilles Z{\'e}mor.
\newblock Tilings of {B}inary spaces.
\newblock {\em {SIAM} J. {D}iscrete {M}ath.}, 9:393--412, 1996.

\bibitem{EtzionVardy}
Tuvi Etzion and Alexander Vardy.
\newblock Perfect binary codes, constructions, properties and enumeration.
\newblock {\em {IEEE} Trans. Inf. Thy.}, 40:754--763, 1994.

\bibitem{EtzionVardy2}
Tuvi Etzion and Alexander Vardy.
\newblock On perfect codes and tilings: problems and solutions.
\newblock {\em {SIAM} J. {D}iscrete {M}ath.}, 11:205--223, 1998.

\bibitem{GordonMillerOstapenko}
Daniel~M. Gordon, Victor~S. Miller, and Peter Ostapenko.
\newblock Optimal hash functions for approximate matches on the $n$-cube.
\newblock {\em {IEEE} Trans. Inf. Thy.}, 56(3):984--991, mar 2010.

\bibitem{COIN:OR}
R.~Lougee-Heimer.
\newblock The {C}ommon {O}ptimization {IN}terface for {O}perations {R}esearch:
  Promoting open-source software in the operations research community.
\newblock {\em {IBM} Journal of Research and Development}, 47(1):57--66,
  January 2003.

\bibitem{FullRank9}
Patric R.~J. {\"O}sterg{\aa}rd and Alexander Vardy.
\newblock Resolving the existence of full-rank tiling of binary {H}amming
  spaces.
\newblock {\em {SIAM} J. {D}iscrete {M}ath.}, 18:382--387, 2004.

\bibitem{FullRank8}
Ari Trachtenberg and Alexander Vardy.
\newblock Full-rank tiling of $\mathbb{F}_2^8$ do not exist.
\newblock {\em {SIAM} J. {D}iscrete {M}ath.}, 16:390--392, 2003.

\end{thebibliography}
\end{document}